\def\minwrt[#1]{\underset{#1}{\text{minimize }}}
\def\argminwrt[#1]{\underset{#1}{\text{arg min }}}
\def\argmaxwrt[#1]{\underset{#1}{\text{arg max }}}
\def\maxwrt[#1]{\underset{#1}{\text{maximize }}}
\def\maxemphwrt[#1]{\underset{#1}{\text{\emph{maximize} }}}
\newtheorem{remark}{Remark}
\newtheorem{proposition}{Proposition}
\newcommand{\norm}[1]{\left\lVert#1\right\rVert}
\def\RN{{\mathbb{N}}}
\def\RR{{\mathbb{R}}}
\newcommand{\dualvar}{\mathcal{G}}
\newcommand{\dualset}{\chi}
\newcommand{\dualsubset}{\mathcal{K}}
\newcommand{\coeffunc}{\mathcal{\xi}}
\newcommand{\Lagrangefunc}{\mathcal{L}}
\newcommand{\psdcone}{\mathcal{C}}
        \def\fps@eqnfloat{!t}
        \def\ftype@eqnfloat{4}
        \newenvironment{eqnfloat*}
               {\@dblfloat{eqnfloat}}
               {\end@dblfloat}
\begin{document}
\title{An efficient solver for \\designing optimal sampling schemes}
% conference papers do not typically use \thanks and this command
% is locked out in conference mode. If really needed, such as for
% the acknowledgment of grants, issue a \IEEEoverridecommandlockouts
% after \documentclass

% for over three affiliations, or if they all won't fit within the width
% of the page, use this alternative format:
% 
\author{\IEEEauthorblockN{Filip Elvander\IEEEauthorrefmark{1}, Johan Sw\"ard, and
Andreas Jakobsson\IEEEauthorrefmark{2}}
\IEEEauthorblockA{\IEEEauthorrefmark{1}Dept. of Electrical Engineering (ESAT-STADIUS), KU Leuven, Belgium\\
%Email: homer@thesimpsons.com
}
\IEEEauthorblockA{\IEEEauthorrefmark{2}Centre for Mathematical Sciences, Lund University, Sweden%,\\
%Atlanta, Georgia 30332--0250\\ Email: see http://www.michaelshell.org/contact.html
}

%\IEEEauthorblockA{\IEEEauthorrefmark{3}Starfleet Academy, San Francisco, California 96678-2391\\
%Telephone: (800) 555--1212, Fax: (888) 555--1212}
%\IEEEauthorblockA{\IEEEauthorrefmark{4}Tyrell Inc., 123 Replicant Street, Los Angeles, California 90210--4321}
}

% use for special paper notices
%\IEEEspecialpapernotice{(Invited Paper)}

% make the title area
\maketitle

% As a general rule, do not put math, special symbols or citations
% in the abstract
%\begin{abstract}
%The abstract goes here.
%\end{abstract}

\begin{abstract}
In this short paper, we describe an efficient numerical solver for the optimal sampling problem considered in \textit{Designing Sampling Schemes for Multi-Dimensional Data} \cite{SwardEJ18_150}. An implementation may be found on\\ \textit{https://www.maths.lu.se/staff/andreas-jakobsson/publications/}.
\end{abstract}
%%%%%%%%%%%%%%%%%%%%%%%%%%%%
\vspace{2mm}
\begin{IEEEkeywords}
Optimal sampling, convex optimization.
\end{IEEEkeywords}
%%%%%%%%%%%%%%%%%%%%%%%%%%%%
%%%%% Intro %%%%%%%%%%%%%%
%\vspace{-1mm}
%
%
\section{Problem statement}
For a background to the optimal sampling problem, see \cite{SwardEJ18_150}. Consider the signal model
\begin{align*}
	y(t) \sim p(\cdot; t, \theta)
\end{align*}
where $p$ is a probability density function parametrized by the sampling parameter $t \in \RR^s$ and the parameter vector $\theta \in \RR^P$. Here, $\theta$ is the parameter of interest to be estimated. Assume that we get to choose to sample $y$ at $K$ out of $N$ potential samples $t_n$, $n = 1,\ldots,N$. We then want to solve
\begin{equation} \label{eq:opt_problem}
\begin{aligned}
	\minwrt[w \in \mathcal{W},\mu \in \RR^P]& \quad\sum_{p=1}^P \psi_p\mu_p \\
	\text{subject to }& \begin{bmatrix}
		\sum_{n=1}^N w_n F_n(\theta) & e_p \\
		e_p^T & \mu
	\end{bmatrix} \succeq 0 \\ &p = 1,\ldots,P,
\end{aligned}
\end{equation}
where
\begin{align*}
\mathcal{W} = \left\{ w \in \RR^N \mid \sum_{n=1}^N w_n = K \;,\; w_n \in [0,1] \right\}
\end{align*}
is the set of allowed weights, indicating which $K$ samples that are selected, $\mu \in \RR^P$ correspond to the Cram\'er-Rao lower bound for $\theta$, and where $\psi \in \RR_+^P$ is a vector of non-negative weights. Also, $e_p$ is the $p$th canonical basis vector. Herein, we describe how to solve \eqref{eq:opt_problem} efficiently by considering its dual problem.\footnote{An implementation of the solution algorithm may be found on \textit{https://www.maths.lu.se/staff/andreas-jakobsson/publications/}.}

%
%%%%%%% DUAL %%%%%%%%%%%%
\section{Dual problem}
Consider the Lagrangian relaxation of \eqref{eq:opt_problem} according to
\begin{align*}
	\mathcal{L} = \sum_p \psi_p \mu_p -\sum_{p=1}^P \left\langle G_p , \begin{bmatrix}
		\sum_{n=1}^N w_n F_n(\theta) & e_p \\
		e_p^T & \mu_p
	\end{bmatrix}\right\rangle,
\end{align*}
where $G_p$, $p = 1,\ldots,P$, are dual variables, i.e., positive semi-definite matrices of dimension $P\times P$. Let $G_p$ be structured according to
\begin{align}\label{eq:dual_partioning}
	G_p = \begin{bmatrix}
		\tilde{G}_p& \gamma_p \\
		\gamma_p^{T} & g_p
	\end{bmatrix}.
\end{align}
%
%
%%%%%%%%%%%%%% ALGORITHM %%%%%%%%%%%%%%%%%%%%
\begin{algorithm}[t]\caption{Sub-gradient ascent.}\label{alg:subgrad_ascent}
	\begin{algorithmic}
		\STATE Require: Initial guess $\dualvar = \left\{G_{p}  \right\}_{p=1}^P $, step size $\alpha$.
		\WHILE{Not converged}
		\STATE Find $w \in \argminwrt[w \in \mathcal{W}]-\sum_{n=1}^Nw_n \coeffunc_n(\dualvar)$.
		\FOR{p=1:P}
		\STATE $\mu_p \leftarrow e_p^T\left( \sum_{n=1}^N w_n F_n(\theta) \right)^{-1}e_p$.
		\ENDFOR
		\FOR{p=1:P}
		\STATE $G_p \leftarrow \mathcal{P}_{\mathcal{K}_{\psi_p}}\left(G_{p}  + \alpha \nabla q_p(\dualvar)\right)$.
		\ENDFOR
		\STATE $\dualvar \leftarrow \left\{G_{p}  \right\}_{p=1}^P $.
		\ENDWHILE
		\RETURN $w \in \argminwrt[w \in \mathcal{W}]-\sum_{n=1}^Nw_n \coeffunc_n(\dualvar)$.
	\end{algorithmic}
\end{algorithm}
%%%%%%%%%%%%%%%%%%%%%%%%%%%%%%%%%%
%
%
For notational convenience, let a dual point be denoted
\begin{align}
	\dualvar = \left\{ G_p \right\}_{p=1}^{P}
\end{align}
and define
\begin{align}
	\coeffunc_n(\dualvar) = \left\langle F_n(\theta), \sum_{p=1}^{P} \tilde{G}_p\right\rangle
\end{align}
Then, for any $w$,
\begin{align*}
	\inf_\mu \Lagrangefunc =\begin{cases}
	-\sum_{n=1}^Nw_n \coeffunc_n(\dualvar) - 2\sum_{p=1}^{P} e_p^T \gamma_p,& \text{ if } g_p= \psi_p, \\
	-\infty & \text{ otherwise}.
	\end{cases}
\end{align*}
The infimum with respect to $w \in \mathcal{W}$ is given by setting the $K$ entries corresponding to the $K$ largest values of $\xi_n(\dualvar)$ equal to 1 and the rest to zero. Note that the minimizing $w$ is not necessarily unique. Specifically, if the $K+1$:th largest value of $\xi_n(\dualvar)$ is strictly smaller than the $K$:th largest, the minimizing $w$ is unique. Otherwise, there are infinitely many solutions.
Thus, the dual problem is
\begin{equation*}
\begin{aligned}
	\maxwrt[\substack{G_p\succeq0 \\p = 1,\ldots,P}]& \inf_{w \in \mathcal{W}} -\sum_{n=1}^Nw_n \coeffunc_n(\dualvar) - \sum_{p=1}^{P} e_p^T \gamma_p\\
	\text{subject to }& \quad g_p = \psi_p \;,\; p = 1,\ldots,P.
\end{aligned}
\end{equation*}
Letting $E = e_P e_P^T$, we may express the constraint as
\begin{align*}
	 \left\langle G_p, E\right\rangle = \psi_p \;,\; p = 1,\ldots,P.
\end{align*}
Thus, defining the family of sets parametrized by $\phi$,
\begin{align} \label{eq:dual_subset}
	\dualsubset_{\phi} = \left\{ U  \mid \left\langle  U, E\right\rangle = \phi \;,\; U\succeq 0 \right\}
\end{align}
and letting
\begin{align*}
	\dualset_\psi = \left\{ \dualvar \mid G_p \in \dualsubset_{\psi_p}\;,\; p= 1,\ldots,P\right\}
\end{align*}
we may express the dual problem as
\begin{equation} \label{eq:dual_problem}
\begin{aligned}
	\maxwrt[\dualvar\in \dualset_\psi] q(\dualvar)
\end{aligned}
\end{equation}
where the dual objective function is
\begin{align} \label{eq:dual_func}
	q(\dualvar) = \inf_{w \in \mathcal{W}} -\sum_{n=1}^Nw_n \coeffunc_n(\dualvar) - \sum_{p=1}^{P} e_p^T \gamma_p.
\end{align}
We utilize the ideas from Nedic and Ozdaglar \cite{NedicO09_19} in order to maximize the dual problem \eqref{eq:dual_problem} using sub-gradient ascent. The algorithm is summarized in Algorithm~\ref{alg:subgrad_ascent}. A short derivation of the step is presented in the following sections.
\subsection{Sub-gradient ascent}
For a dual point $\dualvar \in \dualset_\psi$, a sub-gradient of $q$ in \eqref{eq:dual_func} at $\dualvar$, denoted $\nabla q(\dualvar)$, can be decomposed in components $\nabla q_p(\dualvar)$, where each component is given by
\begin{align}
	\nabla q_p(\dualvar) = - \begin{bmatrix}
		\sum_{n=1}^N w_n F_n(\theta) & e_p \\ e_P^T & \mu_p
	\end{bmatrix}
\end{align}
where
\begin{align}
	(w,\mu_p) \in \argminwrt[w,\mu_p] \Lagrangefunc(\mu,w,\dualvar).
\end{align}
As noted earlier, one may retrieve a primal vector $w$ from this set setting the entries of $w$ corresponding to the $K$ largest values of $\left\{ \coeffunc_n(\dualvar) \right\}_{n=1}^N$ to 1 and the rest to zero. Noting that any $\mu_p \in \RR$ is a member of the minimizing set, one may here choose
\begin{align}
	\mu_p = e_p^T\left( \sum_{n=1}^N w_n F_n(\theta) \right)^{-1}e_p,
\end{align}
i.e., the $\mu_p$ minimizing the primal objective, while still retaining primal feasibility for this choice of $w$. Then, a dual ascent method guaranteeing that the dual variable $\dualvar$ is feasible may be realized according to
\begin{align*}
	G_p \leftarrow \mathcal{P}_{\mathcal{K}_{\psi_p}}\left(G_{p}  + \alpha \nabla q_p(\dualvar)\right)
\end{align*}
for $p = 1,\ldots,P$, where $\mathcal{P}_{\mathcal{K}_{\phi}}$ denotes projection on $\mathcal{K}_{\phi}$, as defined in \eqref{eq:dual_subset}. How to perform this projection is described in the next section.
%
%%%%%% PROJECTION %%%%%%%%%%
\subsection{Projection on PSD cone with constraint}
Consider a set of $\mathcal{G} = \left\{G_m\right\}_{m=1}^M$ of $M \in \RN$ symmetric matrices $G_m \in \RR^{P\times P}$. Let $\psdcone$ be the set of $P\times P$ positive semidefinite matrices. Here, we are interested in computing the projection on the set
\begin{align}
	\mathcal{K}_{\phi} = \left\{ U \mid \left\langle U, E\right\rangle = \phi \;,\; U \in \psdcone \right\}
\end{align}
for $\phi \in \RR_+$ and a symmetric matrix $E \in \psdcone$.
\begin{proposition}\label{prop:projection}
	The projection on $\mathcal{K}_{\phi}$, denoted $\mathcal{P}_{\mathcal{K}_{\phi}}$ is given as
	\begin{align}
		\mathcal{P}_{\mathcal{K}_{\phi}}: G \mapsto \mathcal{P}_\psdcone(G + \lambda E)
	\end{align}
	where $\mathcal{P}_\psdcone$ denotes projection on $\psdcone$, and where $\lambda \in \RR$ is the unique root of the equation
\begin{align}
	\left\langle \mathcal{P}_\psdcone(G + \lambda E), E\right\rangle = \phi.
\end{align}
\end{proposition}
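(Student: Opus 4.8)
The plan is to recognize $\mathcal{P}_{\mathcal{K}_\phi}(G)$ as the solution of the strictly convex program
\begin{align*}
\minwrt[U] \tfrac12\|U - G\|_F^2 \quad\text{s.t.}\quad U \in \psdcone,\ \langle U, E\rangle = \phi,
\end{align*}
and to characterize it through its optimality conditions. Since $\mathcal{K}_\phi$ is a nonempty closed convex set, the projection exists and is unique, so it suffices to exhibit a single point satisfying the KKT system. For $\phi > 0$ the set $\mathcal{K}_\phi$ contains a positive definite matrix, so Slater's condition holds and the KKT conditions are both necessary and sufficient for optimality.

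Next I would dualize only the single linear equality, keeping the conic constraint $U \in \psdcone$ intact. Introducing a multiplier $\lambda \in \RR$, the stationarity condition reads $U - G - \lambda E - Z = 0$, where $Z \succeq 0$ is the conic multiplier, together with complementary slackness $\langle Z, U\rangle = 0$. Writing $M = G + \lambda E$, the conditions $M = U - Z$, $U \succeq 0$, $Z \succeq 0$, $\langle U, Z\rangle = 0$ are precisely the decomposition of $M$ into its positive and negative parts; hence $U = \mathcal{P}_\psdcone(G + \lambda E)$ for \emph{any} value of $\lambda$. All KKT conditions are thus satisfied automatically except the primal equality $\langle U, E\rangle = \phi$, which becomes the scalar equation
\begin{align*}
\langle \mathcal{P}_\psdcone(G + \lambda E), E\rangle = \phi,
\end{align*}
fixing the correct $\lambda$. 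This is exactly the claimed form.

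It then remains to show this equation has a unique root. Let $h(\lambda) = \langle \mathcal{P}_\psdcone(G+\lambda E), E\rangle$. Firm nonexpansiveness of the projection, applied to $G + \lambda_1 E$ and $G + \lambda_2 E$ whose difference is $(\lambda_1-\lambda_2)E$, gives $(\lambda_1 - \lambda_2)\bigl(h(\lambda_1) - h(\lambda_2)\bigr) \geq \|\mathcal{P}_\psdcone(G+\lambda_1 E) - \mathcal{P}_\psdcone(G + \lambda_2 E)\|_F^2 \geq 0$, so $h$ is nondecreasing. To upgrade this to strict monotonicity where it matters, suppose $h(\lambda_1) = h(\lambda_2) = c$ with $\lambda_1 \neq \lambda_2$; the inequality above then forces the two projections to coincide at some $U^\star$, and subtracting their positive/negative-part characterizations yields $Z_1 - Z_2 = (\lambda_2 - \lambda_1)E$, with each $Z_i$ orthogonal to $U^\star$. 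Pairing with $U^\star$ gives $(\lambda_2 - \lambda_1)\langle U^\star, E\rangle = (\lambda_2-\lambda_1)c = 0$, hence $c = 0$; so $h$ is \emph{strictly} increasing wherever it is positive. Together with $h(\lambda) \to 0$ as $\lambda \to -\infty$ and $h(\lambda) \to \infty$ as $\lambda \to +\infty$ (here $E = e_P e_P^T$, so $\langle U,E\rangle$ is the $(P,P)$ entry of $U$ and $\lambda E$ shifts only that entry), continuity and the intermediate value theorem deliver, for each $\phi > 0$, a unique root. I expect this strict-monotonicity step to be the main obstacle, since plain firm nonexpansiveness only yields monotonicity; the complementarity subtraction above is what rules out a flat segment of $h$ at a positive level.
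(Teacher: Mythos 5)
Your proof is correct, and it reaches the paper's characterization through machinery that is related but packaged differently, so a comparison is worthwhile. The paper dualizes \emph{both} constraints: it writes the Lagrangian with multipliers $(\lambda,\Lambda)$, completes the square, maximizes explicitly over $\Lambda\in\psdcone$ (obtaining $\Lambda=\mathcal{P}_\psdcone(-(G+\lambda E))$), and then, rather than maximizing the concentrated dual $\tilde r(\lambda)$, pins down $\lambda$ by primal feasibility. You instead keep the cone constraint in place, write the KKT system with conic multiplier $Z$, and recognize stationarity plus complementarity, $G+\lambda E=U-Z$, $U,Z\succeq 0$, $\langle U,Z\rangle=0$, as exactly the Moreau decomposition, so $U=\mathcal{P}_\psdcone(G+\lambda E)$ for free. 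This is the same duality mechanism, but your packaging is cleaner in two respects: your Slater remark makes the necessity of the multipliers explicit (the paper leaves this implicit), and the Moreau identity enters in its correct form, whereas the paper's displayed version, $G+\lambda E=\mathcal{P}_\psdcone(G+\lambda E)+\mathcal{P}_\psdcone(-(G+\lambda E))$, has a sign slip (it should be a difference). The genuinely different — and stronger — part is the root analysis. The paper disposes of it in one sentence: ``$f$ is continuous, monotone increasing, and thus has a unique zero.'' Taken literally this is a gap: non-strict monotonicity plus continuity yields neither existence nor uniqueness of a zero. Your firm-nonexpansiveness inequality gives the monotonicity honestly, and the complementarity subtraction $Z_1-Z_2=(\lambda_2-\lambda_1)E$ paired against $U^\star$ is precisely the missing ingredient that rules out a flat segment of $h$ at a positive level; it also isolates why $\phi>0$ matters (for $\phi=0$ uniqueness genuinely fails, e.g. $G\prec 0$ and $E=e_Pe_P^T$ give $h\equiv 0$ on a half-line).

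One minor loose end on your side: the limits $h(\lambda)\to\infty$ as $\lambda\to+\infty$ and $h(\lambda)\to 0$ as $\lambda\to-\infty$ are asserted, not proved. The first is one line: $\mathcal{P}_\psdcone(X)-X=\mathcal{P}_\psdcone(-X)\succeq 0$ and $E\succeq 0$ give $h(\lambda)\geq\langle G+\lambda E,E\rangle=\langle G,E\rangle+\lambda\norm{E}_F^2$. The second needs an actual argument; for instance, first-order optimality of $U_\lambda=\mathcal{P}_\psdcone(G+\lambda E)$ tested against $V=0$ gives $\norm{U_\lambda}_F^2\leq\langle G,U_\lambda\rangle+\lambda h(\lambda)$, so if $h(\lambda)\geq\phi>0$ held for all $\lambda$, the right-hand side would tend to $-\infty$ as $\lambda\to-\infty$ while the left-hand side is bounded below — a contradiction. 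Alternatively, you can delete the limit argument entirely: existence of a root already follows from KKT \emph{necessity} at the (existing, unique) projection, which your Slater observation supplies; the intermediate-value route is then redundant, and only your strict-monotonicity step is needed for uniqueness.
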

%
%
%%%%%%%%
\begin{proof}
See appendix.
\end{proof}
%%%%%%%
%
\begin{remark}
It may here be noted that projecting on $\psdcone$ is simply performed by computing an eigenvalue decompostion and setting all negative eigenvalues to zero.
\end{remark}
\subsection{Computational complexity}
It may be noted that finding $(w,\mu) \in \argminwrt[w,\mu] \Lagrangefunc(\mu,w,\dualvar)$ is linear in $N$ and quadratic in $P$. Performing the gradient step is linear in $P$, whereas the projection on the dual feasible set is $\mathcal{O}(P^3)$. To see this, it may be noted that in practice, one may solve the equation $\left\langle \mathcal{P}(G + \lambda E), E\right\rangle = \phi$ using interval halving techniques, where each evaluation of the right-hand side requires computing one eigenvalue decomposition. The per-iteration complexity for this scheme is thus $\mathcal{O}(P^3)$.
\newpage
\appendix
%
%%%%%%%%
\begin{proof}[Proof of Proposition~\ref{prop:projection}]
By definition, $U= \mathcal{P}_{\mathcal{K}_{\phi}}(G)$ solves
\begin{align}
	\minwrt[U \in \mathcal{K}_{\phi}] \frac{1}{2}\norm{U - G}_F^2,
\end{align}
where $\norm{\cdot}_F$ is the Frobenius norm.
To arrive at a dual formulation, consider the Lagrangian
\begin{align*}
	\tilde{\mathcal{L}} = \frac{1}{2} \norm{U - G}_F^2 - \lambda \left( \left\langle U, E\right\rangle - \phi \right) - \langle \Lambda, U\rangle,
\end{align*}
with dual variables $\Lambda \in \psdcone$ and $\lambda \in \RR$.
We may complete the square according to
\begin{align*}
	&\frac{1}{2}\norm{U - G}_F^2 -  \langle \lambda E  +\Lambda, U\rangle \\
	&= \frac{1}{2}\norm{U - \left( G + \lambda E  +\Lambda \right)}_F^2 -  \frac{1}{2}\norm{ G + \lambda E  +\Lambda}_F^2 - \frac{1}{2}\norm{ G}_F^2.
\end{align*}
Then, the infimum of $\tilde{\mathcal{L}}$ with respect to $U$ is given by
\begin{align*}
	\inf_{U}\tilde{\mathcal{L}} = - \frac{1}{2}  \norm{ G + \lambda E  +\Lambda}_F^2 + \norm{G}_F^2 + \phi \lambda,
\end{align*}
which is attained for $U = G + \lambda E  +\Lambda$. Consider the dual function
\begin{align}
	r(\Lambda,\lambda) = - \frac{1}{2}  \norm{ G + \lambda E  +\Lambda}_F^2 + \phi \lambda.
\end{align}
For each $\lambda$, this is maximized with respect to $\Lambda \in \psdcone$ by
\begin{align}
	\Lambda = \mathcal{P}_\psdcone\left( -(G + \lambda E)  \right),
\end{align}
i.e., $\Lambda$ is constructed from the negative part of the eigendecomposition of $G + \lambda E$, with flipped sign. Using
\begin{align}
	G + \lambda E = \mathcal{P}_\psdcone\left( G + \lambda E \right)+ \mathcal{P}_\psdcone\left( -(G + \lambda E)  \right),
\end{align}
this yields
\begin{align}
	\tilde{r}(\lambda) = \sup_{\Lambda} r = -\frac{1}{2} \norm{\mathcal{P}_\psdcone\left( G + \lambda E \right)}_F^2 + \phi \lambda.
\end{align}
This one-dimensional criterion may then be maximized with respect to $\lambda \in \RR$. However, as we from the analysis obtain $U = \mathcal{P}_\psdcone\left( G + \lambda E \right)$, we may utilize the primal feasibility condition $\left\langle  U, E\right\rangle = \phi$. Specifically, one may seek the roots of
\begin{align}
	f(\lambda) = \left\langle  \mathcal{P}_\psdcone(G + \lambda E), E\right\rangle- \phi.
\end{align}	
As $E \in \psdcone$, $f$ is a continuous, monotone increasing function, and $f$ thus has a unique zero. 
\end{proof}
%%%%%%%
\bibliographystyle{IEEEbib}
\bibliography{sampling_solver_arxiv.bbl}

\begin{thebibliography}{1}

\bibitem{SwardEJ18_150}
J.~Sw{\"a}rd, F.~Elvander, and A.~Jakobsson,
\newblock ``Designing {S}ampling {S}chemes for {M}ulti-{D}imensional {D}ata,''
\newblock {\em Signal Processing}, vol. 150, pp. 1--10, 9 2018.

\bibitem{NedicO09_19}
A.~Nedic and A.~Ozdaglar,
\newblock ``Approximate {P}rimal {S}olutions and {R}ate {A}nalysis for {D}ual
  {S}ubgradient {M}ethods,''
\newblock {\em SIAM J. Optim.}, vol. 19, no. 4, pp. 1757--1780, 2009.

\end{thebibliography}
%%%%%%%
\end{document}